\newtheorem{myDef}{Definition}[section]
\newtheorem{myTheo}{Theorem}[section]
\newtheorem{myAss}{Assumption}[section]
\newtheorem{remark}{Remark}[section]
\begin{document}
%
% paper title
% Titles are generally capitalized except for words such as a, an, and, as,
% at, but, by, for, in, nor, of, on, or, the, to, and up, which are usually
% not capitalized unless they are the first or last word of the title.
% Linebreaks \\ can be used within to get better formatting as desired.
% Do not put math or special symbols in the title.
\title{An Input-Based Mimic Defense Theory and its Practice}

% author names and affiliations
% use a multiple column layouts for up to three different
% affiliations
% \author{\IEEEauthorblockN{Michael Shell}
% \IEEEauthorblockA{School of Electrical and\\Computer Engineering\\
% Georgia Institute of Technology\\
% Atlanta, Georgia 30332--0250\\
% Email: http://www.michaelshell.org/contact.html}
% \and
% \IEEEauthorblockN{Homer Simpson}
% \IEEEauthorblockA{Twentieth Century Fox\\
% Springfield, USA\\
% Email: homer@thesimpsons.com}
% \and
% \IEEEauthorblockN{James Kirk\\ and Montgomery Scott}
% \IEEEauthorblockA{Starfleet Academy\\
% San Francisco, California 96678-2391\\
% Telephone: (800) 555--1212\\
% Fax: (888) 555--1212}}

% make the title area
\maketitle
% As a general rule, do not put math, special symbols, or citations
% in the abstract
\begin{abstract}
The current security problems in cyberspace are characterized by strong threats. Defenders face outstanding problems such as lack of prior knowledge, unknown threats and vulnerabilities, and difficulties in validating metrics, which urgently need new fundamental theories to support. To address these issues, this article proposes a generic theoretical model for cyberspace defense and a new mimic defense framework based on the theory: Spatiotemporally heterogeneous, Input based, and Dynamically updated Mimic Defense (SIDMD). We make the following contributions: (1) We first redefine vulnerabilities from the input space perspective to normalize the cyberspace security problem. (2) combining temporal and spatial dimensions, we propose a novel ‘Dynamic, Heterogeneity, Redundancy (DHR)’ framework. Theoretical analysis and experimental results show that SIDMD has the best security in complex attack scenarios, and the probability of a successful attack and the possible harm is greatly reduced compared to the state-of-the-art.
\end{abstract}

% no keywords

% For peer review papers, you can put extra information on the cover
% page as needed:
% \ifCLASSOPTIONpeerreview
% \begin{center} \bfseries EDICS Category: 3-BBND \end{center}
% \fi
%
% For peer review papers, this IEEEtran command inserts a page break and
% creates the second title. It will be ignored for other modes.
\IEEEpeerreviewmaketitle

\section{Introduction}

With the rapid development of network information technology, the popularity of the Internet has greatly facilitated our life. At the same time, network security issues are also receiving more and more attention from the government and the academic world. However, existing network defense technologies such as anti-virus, firewall, intrusion detection [1], honeypot [2], sandbox [3], and intrusion tolerance [4] can only handle attacks with known characteristics, which is essentially a "mending after a sheep is lost" method. To defend against vulnerabilities and backdoors with unknown characteristics, Moving Target Defense technology (MTD)[5] enhances the uncertainty of the system by constantly changing the attack surface of the system so that attackers cannot accurately detect system vulnerabilities. Proposed by Jiangxing Wu, Cyber Mimic Defense technology (CMD) [6] is a dynamics, heterogeneity, and redundancy structure design based on the attributes of the system itself. Compared with MTD technology, the mimetic defense greatly increases the cost of attackers resulting in higher robustness against multiple types of attacks. CMD theory has great potential in security performance, and many actual network scenarios are willing to pay more in exchange for reliable network security. In recent years, the mimic defense has been applied in NFV [7], cloud service [8], edge-computing [9], distributed system [10], etc., demonstrating a better defense effect.

In the face of increasingly diverse cyberattacks, a large amount of research has been conducted to propose effective defense technologies for specific types of attacks and application scenarios. For example, poisoning attacks, which affect the learning results by poisoning the training dataset, have been studied quite a lot for attack and defense against models such as machine learning [11]-[15], federated learning [16]-[18] and so on. Usually, separating the attack and defense modules, mining the characteristics of the model itself to design effective attacks, and then giving targeted defense is the main research approach at present. In other words, if there is more than one type of attack against the model, its defense performance is likely to decline significantly in the long term. Therefore, to address the limitation that a single defense strategy can only defend against specific attacks, it is necessary to introduce mimic defense techniques to defend against multiple attacks in cyberspace.

Research on mimic defense techniques mainly focuses on the impact of structure design, heterogeneous measurement, scheduling strategy, and adjudication mechanism on defense performance. The heterogeneity of mimic defense refers to the maximum heterogeneity of the hardware and software implementation of several executors capable of achieving the same function. The scheduling of mimic defense refers to the mimic scheduler dynamically updating several heterogeneous executors with a certain strategy and cleaning the expired executors offline. The scheduling strategy of mimic defense is an important foundation to ensure its dynamic and heterogeneous. A better scheduling strategy can make the mimic defense system expose fewer vulnerabilities and improve the security performance of the whole network.

To solve the aforementioned issues, in this paper, we construct a theoretical model of mimic defense from an input perspective for the first time and propose a new mimic defense framework named SIDMD. We make the following contributions: (1) for a better heterogeneity measure, we give an exact definition of vulnerability that is understandable and easy to use in practice from the perspective of the input space, and introduce a k-order heterogeneity calculation method. (2) we design a dynamic update algorithm for executor vulnerabilities that can exploit historical attack information. (3) we first propose a dynamic scheduling strategy for executors based on spatiotemporal heterogeneity, which considers both temporal and spatial dimensions and significantly reduces the possibility that the system will expose vulnerabilities. (4) we run a simulation and deploy mimic defense on a variety of machine learning classification algorithms and extensively evaluate our defense performance on the Darknet2020 dataset. We elaborate on our contributions below.

On the heterogeneous measurement, we note that the input space of heterogeneous executors is the same regardless of the differences in structure, implementation, etc. First, we abstract the input space of the executor into a metric space and define the concept of "distance" between inputs. Secondly, based on the above definition, we give the vulnerability set of executors and define the index for similarity and heterogeneity calculation, so that the normalized comparison between different executors is possible. Thirdly, we design a dynamic update algorithm for executor vulnerabilities based on FPS clustering, which can update the summary of the input stream based on historical attack information, reflecting the adaptability to the dynamic environment and evolutionary attack.

On the scheduling strategy, we propose the first scheduling strategy considering spatiotemporal heterogeneity. In addition to considering spatial heterogeneity, this strategy can combine temporal information and select the best executor to be replaced at the current time based on the vulnerability distribution of executors at the previous time. Compared with spatial heterogeneity, which only considers the set of executor vulnerabilities at a single time, it can avoid the situation of few common mode vulnerabilities but a long existence time. It solves the limitation of the previous scheduling strategy that only considers spatial or temporal heterogeneity.

We evaluate SIDMD by simulating and validating it. First, we did a simulation and the results showed that the vulnerabilities discovered by SIDMD are close to the real ones, and the probability of being attacked by SIDMD is about 1/12 of DHR, and the total attack time is about 1/14 of DHR. The security performance of SIDMD with 5 online executors is similar to DHR with 7 online executors. Second, we tested our model on Darknet2020, and the findings indicate that SIDMD is substantially more effective and robust than the other defense models we chose. SIDMD consistently maintains a performance retention rate of 97\% when defending against various attacks, which is greater than Proda(96\% at most) and TRIM's rates(90\% at most), and SIDMD has good performance against different attacks, while the other two defense models have great differences in defense performance against different attacks. Additionally, merely the deployment of SIDMD does not reduce the initial classification accuracy in the absence of assaults.

The rest of the paper is organized as follows. In Section~2, we provide the necessary background knowledge. In Section~3 we introduce our mimic defense theory. Section~4 provides the main framework of SIDMD. Subsequently, in Sections 5 and 6, we introduce the dynamic vulnerability update algorithm and the scheduling algorithm in detail, respectively. In Section~7, we analyze the complexity and cost of SIDMD and validate the defense performance of SIDMD through simulation and real datasets. Finally, we present related work in Section~8 and draw conclusions in Section~9.

% You must have at least 2 lines in the paragraph with the drop letter
% (should never be an issue)

\section{Preliminaries}

In this section, we introduce the main framework of traditional mimic defense and the necessary concepts and theorems.

\subsection{The Main Framework of Mimic Defense.}

\begin{figure}[H]
    \centering \includegraphics[width=8.3cm]{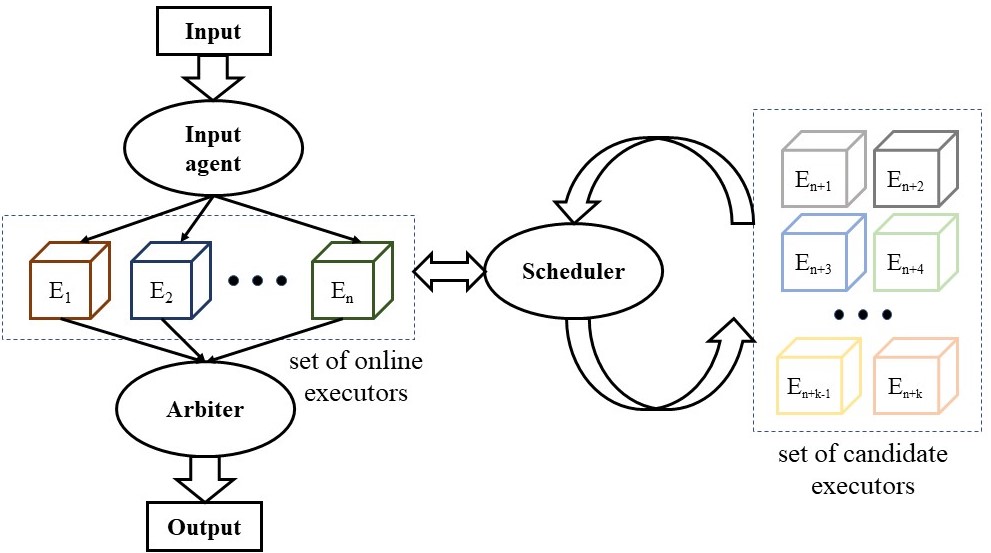}
    \caption{The main framework of mimic defense}
    \label{fig:galaxy}
\end{figure}

The main framework of mimic defense is dynamic heterogeneous redundancy(DHR) [6]. As shown in Figure~1, it is mainly composed of five parts, including input agents, heterogeneous executor pool, scheduler, online executor set, and arbiter. The details for each of them are discussed as follows:

\textbf{Input agents:} The input agent replicates and distributes the input to each executor of the online executor set.

\textbf{Hetergeneous executor pool:} The heterogeneous executor pool include a variety of functionally equivalent and structurally different heterogeneous executor. The heterogeneous executor pool is an alternative pool in which executors can be scheduled into the online execution set and process input.

\textbf{Scheduler:} The scheduler selects a certain number of executors from the heterogeneous executor pool to be online with a certain scheduling algorithm.

\textbf{Online executor set:} An online executor set is a collection of heterogeneous executors running online, where each executor is independent and processes input data in parallel. Their outputs are aggregated to the arbiter.

\textbf{Arbiter:} The arbiter will aggregate the outputs of the online executors and generate voting results using a specific voting algorithm.

The heterogeneity of the executors in the heterogeneous executor pool ensures the heterogeneity of the DHR. The scheduler continuously schedules over time to ensure the dynamism of the DHR, and the online executor set ensures the redundancy of the DHR. The dynamism, heterogeneity, and redundancy of DHR make the system uncertain in time and space, making it difficult for attackers to exploit the vulnerabilities of the system so that the system can obtain endogenous security features.

\subsection{Metric Space}

We deploy in our work the metric space. so we recall here the definition of metric space.

\begin{myDef}
Let $A$ be a set, and let $d(\cdot,\cdot)$ be a binary real-valued function on $A$. We say $d(\cdot,\cdot)$ a metric on $A$, if $d(\cdot,\cdot)$ satisfies:
\begin{enumerate}
\item $d(a,b)\geq 0, \forall a, b \in A$, and $d(a,b)=0 \Leftrightarrow a=b$;
\item $d(a,b)=d(b,a), \forall a, b \in A$;
\item $d(a,b) \leq d(a,c)+d(c,b), \forall a, b, c \in A$.
\end{enumerate}
\end{myDef}

Obviously, the metrics on a set are not unique, and we need to define different metrics according to different scenarios.

\subsection{Equivalence Relation and Set Partitioning}

We deploy in our work the equivalence relation and set partitioning related theory, so we
recall here the relevant theory.

\begin{myDef}
Let $A$ be a set, and let $R$ be a binary relation on $A$. We say $R$ a equivalence relation on $A$, if $R$ satisfies:
\begin{enumerate}
\item $a R a $;
\item $a R b \Rightarrow b R a$;
\item $a R b, b R c \Rightarrow a R c $.
\end{enumerate}
where $a,b,c \in A$.
\end{myDef}

\begin{myTheo}
An equivalence relation on set A uniquely determines a partition of set A, which is there is a unique set of sets $A_1, A_2, \dots, A_n \subset A $, s.t.

\begin{enumerate}
\item $A=A_1\cup A_2\cup \dots \cup A_n$;
\item $A_i \cap A_j = \emptyset ,\  \forall i,j$;
\item $a R b \Leftrightarrow a,b \in A_i,\  \exists i$.
\end{enumerate}
\end{myTheo}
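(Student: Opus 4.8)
The plan is to prove both directions of the correspondence: first that an equivalence relation $R$ gives rise to a partition, and then that this partition is unique. For the first direction, I would define, for each $a \in A$, the \emph{equivalence class} $[a] = \{b \in A : aRb\}$. The key steps are: (i) every element belongs to some class, namely $a \in [a]$ by reflexivity, so the union of all classes is $A$, giving property~1; (ii) any two classes are either identical or disjoint --- to see this, suppose $[a] \cap [b] \neq \emptyset$ with $c$ in the intersection, then $aRc$ and $bRc$, so by symmetry and transitivity $aRb$, and a further symmetry/transitivity argument shows $[a] = [b]$; this yields property~2 once we index the distinct classes as $A_1, \dots, A_n$; (iii) property~3 is essentially the definition: $aRb$ iff $b \in [a]$ iff $a$ and $b$ lie in the common class $[a]$.

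For uniqueness, I would suppose $\{B_1, \dots, B_m\}$ is another partition satisfying the three properties and show it coincides with $\{A_1, \dots, A_n\}$. Take any block $B_j$ and any $a \in B_j$; by property~3 applied to the $B$-partition, $B_j = \{b : aRb\} = [a]$, and by property~3 applied to the $A$-partition this same set equals the unique $A_i$ containing $a$. Hence every $B_j$ equals some $A_i$ and conversely, so the two partitions are the same collection of sets.

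The main obstacle --- really the only subtle point --- is establishing step~(ii), that distinct equivalence classes are disjoint, since this is where all three axioms (reflexivity to get started, symmetry and transitivity to chain relations together) are used in concert; the rest is bookkeeping. One should also be slightly careful about the implicit finiteness in the statement (the partition is written as a finite list $A_1, \dots, A_n$): the argument above works verbatim for an arbitrary index set, and if $A$ is finite then the number of classes is automatically finite, so no generality is lost. I would remark on this rather than belabor it.
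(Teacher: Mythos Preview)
Your argument is the standard textbook proof and is correct in every detail, including the remark about the implicit finiteness of the index set. Note, however, that the paper does not actually supply a proof of this theorem: it is stated in the preliminaries (Section~2.3) as a background fact from elementary set theory and is used without justification, so there is no ``paper's own proof'' to compare against. Your write-up would serve perfectly well as the omitted proof.
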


\section{Input-based Mimic Defense Theory}

In this section, we assume the input space of the executors to be a metric space. By defining the equivalence relation for the abnormal input set of the executor, we define the vulnerability of the executor, and then we considered the high-order similarity and heterogeneity of executors.

Due to the differences between the executors, it is difficult to define the similarity and heterogeneity of the executors from the perspective of the executors themselves. We notice that no matter how different the two executors are, their input space is always exactly the same. Therefore, executors can be compared in a normalized manner from the perspective of the input space. Based on this idea, we propose the following theoretical model describing the similarity and heterogeneity of executors.

\begin{myDef}
    Let $m$ denote an input to the executors and let $M=\left\{ m \right\}$ denote the set of all inputs, we also call $M$ the input space of executors.
\end{myDef}

In practical applications, the executors are various. We usually only focus on what the executor will output for specific input and whether the executor output correctly. Therefore, we can abstract the executors as a function with the input as an argument:

\begin{myDef}
Let function $f$ denote an executor, and let $F$ denote the set of all executors. Let $f(m)$ denotes the output of executor $f$ for input $m$.
\end{myDef}

To state our results more conveniently, we introduce a perfect executor:

\begin{myDef}
Let function $f_0$ denote a perfect executor, which means that for any given input $f$, $f_0$ will always output correctly.
\end{myDef}

It can be seen that the result of $f_0(m)$ represents the output that the executor $f$ should produce, in other words, the correct output. But in reality, since each executor always has known or unknown vulnerabilities, $f_0$ is only an ideal model and has only theoretical significance. And we cannot know the value of $f_0(m)$. The perfect executor $f_0$ is introduced here only for the convenience of presentation.

\begin{myDef}
Let $m$ be an input and let $f$ be an executor. We say $m$ is a normal input of executor $f$, if $f(m) = f_0(m)$. We say $m$ an abnormal input of executor $f$, if $f(m) \ne f_0(m)$.
\end{myDef}

\begin{myDef}
We say an input $m$ is a known input if $m$ is what people find in practical applications. Let $m_s$ denote a known input and let $M_{s}=\bigcup \left\{ m_s \right\}$  denote the set of all known inputs.
\end{myDef}

Obviously, $M_s$ satisfies:
\begin{equation}
M_s \subset M.
\end{equation}
And in practice, $M$ is usually an infinite set, and $M_s$ is usually a finite set.

\begin{myDef}
Let $f$ be an executor, and let $V(f)$ denote the set of all abnormal inputs of $f$ which is:
\begin{equation}
\begin{split}
V\left( f \right) =\left\{ m\in M\,\,| f\left( m \right) \ne f_0(m) \right\}.
\end{split}
\end{equation}
\end{myDef}

\begin{myDef}
Let $f$ be an executor, and let $V_s(f)$ denote the set of all known abnormal inputs of $f$ which is:
\begin{equation}
V_s(f)=V(f)\cap M_s.
\end{equation}
\end{myDef}

\begin{myAss}
Let $m_1,m_2$ be two abnormal inputs of executor $f$. If $m_1$ and $m_2$ are similar enough, that is, $d(m_1,m_2)$ is less than a given small positive number $\varepsilon_0$, then we can think that the abnormal input $m_1$ and $m_2$ are exploiting the same vulnerability of $f$.
\end{myAss}

According to Assumption 3.1, we propose a definition of executors' vulnerability based on equivalence relation and set partitioning.

\begin{myDef}
Let $A$ be a subset of the input space $M$, let $a,b$ be two elements in A ,let $d(\cdot,\cdot)$ be a metric on $M$, We say that $a$ and $b$ are related, denoted as $a \sim b$, if there are some inputs $m_1, m_2, \dots, m_n$, satisfy:

\begin{enumerate}
\item $d\left( a,m_1 \right) <\varepsilon _0, d\left( m_n,b \right) <\varepsilon _0$;
\item $d\left( m_i,m_{i+1} \right) <\varepsilon _0, i=1,2,\cdots ,n-1$;
\item $m_1,m_2,\cdots ,m_n\in A$.
\end{enumerate}
where $\varepsilon_0$ is a small positive number.

\end{myDef}

Obviously, $"\sim"$ has the following properties:

\begin{enumerate}
\item $a \sim a $;
\item $a \sim b \Rightarrow b\sim a$;
\item $a \sim b, b \sim c \Rightarrow a \sim c $.
\end{enumerate}
where $a,b,c \in A$. Then, $"\sim"$ is an equivalence relation on set A.

\begin{myDef}
Let $f$ be a executor, let $V(f)$ be the set of all abnormal inputs of $f$, and let $"\sim"$ be the equivalence relation mentioned in Definition~3.8. According to Theorem~2.1, based on this equivalence relation, we can uniquely divide $V(f)$ into some disjoint subsets, and we say each subset is a vulnerability of the executor $f$.
\end{myDef}

After defining the vulnerabilities of the executors, we consider the similarity and heterogeneity of the executors. 

\begin{myDef}
Let $A$ be a subset of $M$. Let $P(A)$ represent the probability of drawing a sample from $M$ that happens to belong to $A$.
\end{myDef}

Obviously, the higher the probability that $f_1$ and $f_2$ are wrong at the same time, the more similar $f_1$ and $f_2$ are. So the similarity of $f_1$ and $f_2$ can be evaluated as the probability that $f_1$ and $f_2$ output the wrong result at the same time, i.e. $P(V(f_1)\cap V(f_2))$, the heterogeneity of $f_1$ and $f_2$ can be evaluated as 1-$P(V(f_1)\cap V(f_2))$. Next, we discuss the similarity and heterogeneity of executors in general.

\begin{myDef}
Let $f_1,f_2,\dots,f_n$ be $n$ executors, and the $k$-order similarity of these $n$ executors can be evaluated as: for a random input, the probability that $k$ of the $n$ executors make mistakes at the same time, and denoted as $P_k(f_1,f_2,\dots,f_n)$. $P_k(f_1,f_2,\dots,f_n)$ can be calculated as:

\begin{equation}
\begin{split}
        P_k(f_1,f_2,\dots,f_n)=\sum_{i=1}^{C_n^k} P(A_i)-\sum_{1\leq i<j\leq C_n^k} P(A_i\cap A_j)  \\
         + \sum_{1\leq i<j<k \leq C_n^k} P(A_i\cap A_j \cap A_k)-\dots\\
         +(-1)^{C_n^k-1}P(A_1\cap \dots \cap A_{C_n^k}),
\end{split}
\end{equation}
where each $A_i$ represents an intersection of some $k$ executors' abnormal input sets:

\begin{equation}
A_i = \bigcap_{j=i_1,i_2,\dots,i_k} V(f_j),
\end{equation}
where $1 \leq j \leq n$ and $i_a \ne i_b, \forall a,b$.
\end{myDef}

The reason why we define the high-order similarity of multiple executors is that it can directly reflect the probability of arbitration failure. For example, if a voting-based arbitration DHR model deploys $2l+1$ executors, then the probability of the entire DHR model failing is equal to the probability of $l+1$ executors failing at the same time. Therefore, our method on the definition of high-order similarity can well evaluate the defense performance of DHR models. Naturally, k-order heterogeneity of executors can be evaluated as $1-P_k(f_1,f_2,\dots,f_n)$.

Obviously, to evaluate the high-order similarity and heterogeneity of multiple executors, the key is to calculate the value of the function $P(\cdot)$. However, in practical situations, it is difficult for us to obtain the abnormal input set $V(f)$ of the executor $f$ and the input space $M$. That is to say, it is almost impossible for us to get the value of $P(\cdot)$ by direct calculation. Therefore, we propose an approximate calculation of the function $P(\cdot)$ based on the known input set $M_s$.

\begin{myTheo}
    Assuming that the elements in $M_s$ are all extracted independently and identically from $M$, then for any subset $A$ of $M$, $\forall \varepsilon>0$ satisfy:
    \begin{equation}
        \lim_{|M_s|\rightarrow \infty} Pr(|\frac{|M_s \cap A|}{|M_s|}-P(A)|<\varepsilon) = 1,
    \end{equation}
    where $Pr(\cdot)$ denotes the probability of something happening, and $|\cdot|$ denotes the number of elements in a set.
\end{myTheo}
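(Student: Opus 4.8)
The plan is to recognize this statement as the weak law of large numbers specialized to Bernoulli trials, and to prove it directly via Chebyshev's inequality. First I would fix an arbitrary subset $A \subseteq M$ and an arbitrary $\varepsilon > 0$, write $n = |M_s|$, and enumerate the known inputs as $m_1, m_2, \dots, m_n$. For each $i$ I introduce the indicator random variable $X_i = \mathbbm{1}[m_i \in A]$, recording whether the $i$-th sampled input lands in $A$. By the i.i.d.-sampling hypothesis the $m_i$ are drawn independently and identically from $M$, so the $X_i$ are i.i.d.\ Bernoulli random variables with $\mathbb{E}[X_i] = P(A)$ and $\mathrm{Var}(X_i) = P(A)\bigl(1 - P(A)\bigr)$.

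Next I would record the key identity $|M_s \cap A| = \sum_{i=1}^{n} X_i =: S_n$, so that the empirical frequency appearing in the theorem is exactly the sample mean $S_n/n$. Linearity of expectation gives $\mathbb{E}[S_n/n] = P(A)$, and independence gives $\mathrm{Var}(S_n/n) = P(A)\bigl(1 - P(A)\bigr)/n \le 1/(4n)$, the last step using $t(1-t)\le 1/4$ for all $t\in[0,1]$.

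Applying Chebyshev's inequality to $S_n/n$ then yields
\begin{equation}
Pr\!\left( \Bigl| \tfrac{|M_s \cap A|}{|M_s|} - P(A) \Bigr| \ge \varepsilon \right) \;\le\; \frac{\mathrm{Var}(S_n/n)}{\varepsilon^{2}} \;\le\; \frac{1}{4\,|M_s|\,\varepsilon^{2}},
\end{equation}
which tends to $0$ as $|M_s|\to\infty$; taking complements gives $\lim_{|M_s|\to\infty} Pr\bigl(\bigl||M_s\cap A|/|M_s| - P(A)\bigr| < \varepsilon\bigr) = 1$, as claimed.

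I do not expect a serious obstacle, since the result is classical; the only points needing care are (i) the reduction of the combinatorial quantity $|M_s\cap A|$ to a sum of i.i.d.\ indicators, which uses the sampling hypothesis in an essential way, and (ii) the observation that $P(A)\bigl(1-P(A)\bigr)\le 1/4$ makes the tail bound uniform over all $A$, so that the argument never requires the (in practice unknown) value of $P(A)$ itself.
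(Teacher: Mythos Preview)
Your proposal is correct and is essentially the same approach as the paper's: the paper's entire proof is the single sentence ``This theorem is easy to be proved by Bernoulli's law of large numbers,'' and what you have written is precisely the standard Chebyshev-inequality derivation of that law for i.i.d.\ Bernoulli indicators. Your version is simply a fully unpacked form of the citation the paper makes, with the added nicety of the uniform bound $P(A)(1-P(A))\le 1/4$.
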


\begin{proof}
This theorem is easy to be proved by Bernoulli's law of large numbers.
\end{proof}

According to Theorem 3.1, under the condition that the known inputs are independent and identically distributed, for any given subset $A$ of $M$, we can approximate the true probability $P(A)$ with $\frac{|A\cap M_s|}{|M_s|}$. We denote $\frac{|A\cap M_s|}{|M_s|}$ with $P'(A)$. Next, we introduce a method for finding an approximation $P'_k(f_1,f_2,\dots,f_n)$ of the k-order heterogeneity $P_k(f_1,f_2,\dots,f_n)$ based on the known input space $M_s$.

Let $f_1,f_2,\dots,f_n$ be some executors. A straightforward way to calculate the k-order similarity of executors $P'_k(f_1,f_2,\dots,f_n)$ is just like Equation (4), and just replace $V(f_i)$ with $V_s(f_i)$. However, the input space of the executors is often continuous and the known input space is only a finite number of points sampled from the input space, so it is difficult to find exactly the same input, which means that even if two executors $f_1$ and $f_2$ are similar enough, the intersection of $V_s(f_1)$ and $V_s(f_2)$ is likely to be an empty set. Therefore, we propose a new method of computing intersections to deal with this. We weaken the requirement for the elements of the intersection operation, that is, the elements are not required to be exactly equal, as long as the two elements are close enough, they belong to the intersection of the sets. 

\begin{myDef}
Let $A_1,A_2,\dots,A_n$ be some sets, and we say $a\in \bigcup_{i=1,\dots,n} A_i $ is a common element within $\varepsilon_0$ error tolerance of $A_1,A_2,\dots,A_n$, if:
\begin{equation}
    \forall i=1,2,\dots,n, \ \exists a_i \in A_i,\ s.t. \ d(a,a_i)<\varepsilon_0.
\end{equation}

Let $(A_1 \cdot ,A_2 \cdot , \cdot \dots \cdot , \cdot A_n)_{\varepsilon_0}$ denote the set of all $a\in \bigcup_{i=1,\dots,n} A_i $ that satisfies the above condition, and we say $(A_1 \cdot ,A_2 \cdot , \cdot \dots \cdot , \cdot A_n)_{\varepsilon_0}$ the intersection of $A_1,A_2,\dots,A_n$ within $\varepsilon_0$ error tolerance.
\end{myDef}

It is easy to verify that the set operation satisfies the commutative law, so the result of the operation has nothing to do with the order of the set. However, this operation does not satisfy the associative law, so it cannot be calculated by dividing the operation into several parts and then synthesizing them.

In the case of no misunderstanding, we write $(A_1 \cdot ,A_2 \cdot , \cdot \dots \cdot , \cdot A_n)_{\varepsilon_0}$ as $A_1 \cdot ,A_2 \cdot , \cdot \dots \cdot , \cdot A_n$ for convenience. According to Assumption~4.1, we can define the known common abnormal input set as follows:

\begin{myDef}
Let $f_1,f_2,\dots,f_n$ be some executors, and $V_s(f_1),V_s(f_2),\dots,V_s(f_n)$ are their known abnormal input set. We say $m\in \bigcup_{i=1,\dots,n} V_s(f_i)$ is a common abnormal input of $f_1,f_2,\dots,f_n$, if $m\in V_s(f_1) \cdot ,V_s(f_2) , \cdot \dots \cdot ,V_s(f_n)$, and we say $V_s(f_1) ,V_s(f_2) \cdot , \cdot \dots \cdot , V_s(f_n)$ the known common abnormal input set of executors $f_1,f_2,\dots,f_n$.
\end{myDef}

Based on this, we can evaluate k-order similarity and heterogeneity of executors by define $P'_k(f_1,f_2,\dots,f_n)$:

\begin{myDef}
Let $f_1,f_2.\dots,f_n$ be some executors, and $V_s(f_1),V_s(f_2),\dots,V_s(f_n)$ are their known abnormal input set. Then $P'_k(f_1,f_2,\dots,f_n)$ is calculated as:
\begin{equation}
\begin{split}
        P'_k(f_1,f_2,\dots,f_n)=\sum_{i=1}^{C_n^k} P'(A_i)-\sum_{1\leq i<j\leq C_n^k} P'(A_i\cdot A_j)  \\
         + \sum_{1\leq i<j<k \leq C_n^k} P'(A_i\cdot A_j \cdot A_k)-\dots\\
         +(-1)^{{C_n^k}-1}P(A_1\cdot \dots \cdot A_{C_n^k}),
\end{split}
\end{equation}
where each $A_i$ represents an intersection of some $k$ executors' known abnormal input sets:

\begin{equation}
A_i =  V_s(f_{i_1})\cdot \dots \cdot V_s(f_{i_k}),
\end{equation}
where $1 \leq i_j \leq n$ and $i_a \ne i_b, \forall a,b$.
\end{myDef}

Naturally, the k-order heterogeneity of executors can be calculated as $1-P'_k(f_1,f_2,\dots,f_n)$.

\section{An Overview of SIDMD Framework}

In this section, we introduce the main framework of the SIDMD model, and a general algorithm description of the SIDMD framework is presented in Algorithm~1. Based on theoretical models in Section~3, we propose the framework of SIDMD: spatiotemporally heterogeneous, input-based, and dynamically updated mimic defense, and the framework of SIDMD is shown in Figure~2.

\begin{figure*}
    \centering \includegraphics[width=17.6cm]{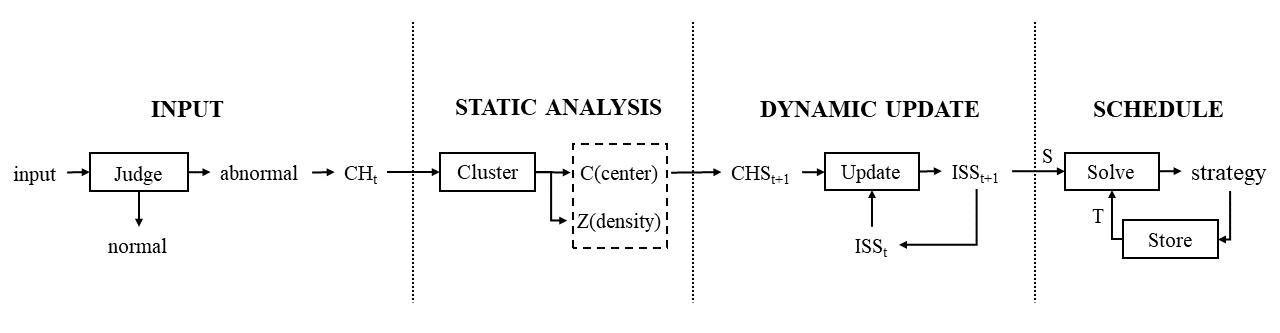}
    \caption{An overview of SID-DHR framework.}
    \label{fig:6}
\end{figure*}

First, the real-time input information will be judged whether it is an abnormal input for an executor. If so, it is stored and used to update the know abnormal input set of the executor. When enough abnormal inputs are stored for an executor, these stored abnormal inputs are fed into the clustering algorithm as input blocks and summarized. Next, the summary of the input stream $ISS$ will be updated by the summary of the input chunk $CHS$. Finally, When scheduling, the scheduling algorithm considers the factors of time and space, and obtains the current optimal scheduling strategy according to the historical scheduling strategy and the summary of the input stream.

Let $IS$ be an input stream and let $CH_t$ be a chunk of input from $IS$ such that $IS=\bigcup^{\infty}_{t=1} \left\{ CH_t \right\}$, where $t$ refer to the time index of a input chunk. For each input chunk, $CH_t=\bigcup^{|CH_t|}_{i=1} \left\{ m_{t,i}|m_{t,i}\in M \right\}$, where $m_{t,i}$ denotes a input sample in $CH_t$. Let $CH_t^i$ to denote the abnormal input of the $i$th executor in $CH_t$. Let $IS^i$ represent the abnormal input stream of executor $i$, where $IS^i=\bigcup^{\infty}_{t=1} \left\{ CH_t^i \right\}$. Let $CHS_t^i = \left\{ C^{j}_{CH_t^i}, Z^{j}_{CH_t^i} \right\}$ be the summary of $CH_t^i$, where $C^{j}_{CH_t^i}$ represents the center of the $j$th cluster in $CH_t^i$ and the $Z^{j}_{CH_t^i}$ represents the number of elements in the cluster. Let $ISS_t^i$ be the summary of $IS^i$ up to time $t$, and the definitions of $C^{j}_{IS_t^i}, Z^{j}_{IS_t^i}$ are similar to the above. Let $SS_t$ represents the scheduling strategy at time $t$. Assume that there are a total of $N_p$ executors in the heterogeneous executor pool.

\begin{algorithm}
	%\textsl{}\setstretch{1.8}
	\renewcommand{\algorithmicrequire}{\textbf{Input:}}
	\renewcommand{\algorithmicensure}{\textbf{Output:}}

	\caption{An overview of SIDMD framework} 
	\label{alg3} 
	\begin{algorithmic}[1]
		\REQUIRE $IS$; $r$: the radius of the cluster; $r_0$: maximum distance to merge clusters.
		\ENSURE $SS_t$
\FOR{$t=1$ to $\infty$}

\IF{$ISS^i_t \ne \emptyset,\ \forall i$}
\STATE Use the scheduling algorithm to obtain the scheduling policy $SS_t$
\ELSE
\STATE Random schedule to get scheduling policy $SS_t$
\ENDIF

\FOR{$i=1$ to $N_p$}

\STATE Judge the abnormal input of the $i$th executor in $CH_t$ and get $CH_t^i$

\STATE Cluster abnormal inputs in $CH_t^i$ and get $CH_t^i$'s summary $CHS_t^i$

\IF{$ISS^i_t = \emptyset$}

\STATE $ISS_t^i=CH_t^i$

\ELSE

\STATE Use $CHS_t^i$ to update the $IS^i$'s summary up to time $t-1$ $ISS_{t-1}^i$ to obtain $ISS_{t}^i$
\ENDIF
\ENDFOR

\ENDFOR
	\end{algorithmic}
\end{algorithm}

\section{Dynamic Update Algorithm for Vulnerabilities}

In this section, we describe how to obtain the set of abnormal inputs for each executor in practical applications and how to automatically update them over time. Considering that the number of abnormal inputs of each executor tends to be infinite over time, the dynamic update algorithm proposed by us only needs to store the summary of each small vulnerability, which greatly reduces the space complexity of the algorithm.

\subsection{Obtaining Abnormal Inputs}

In practice, it is hard to know in advance whether an input is normal or abnormal. Therefore, with SIDMD's arbitration mechanism, we propose a method that can automatically determine whether the input is normal to an executor.

\begin{figure}[H]
    \centering \includegraphics[width=8.3cm]{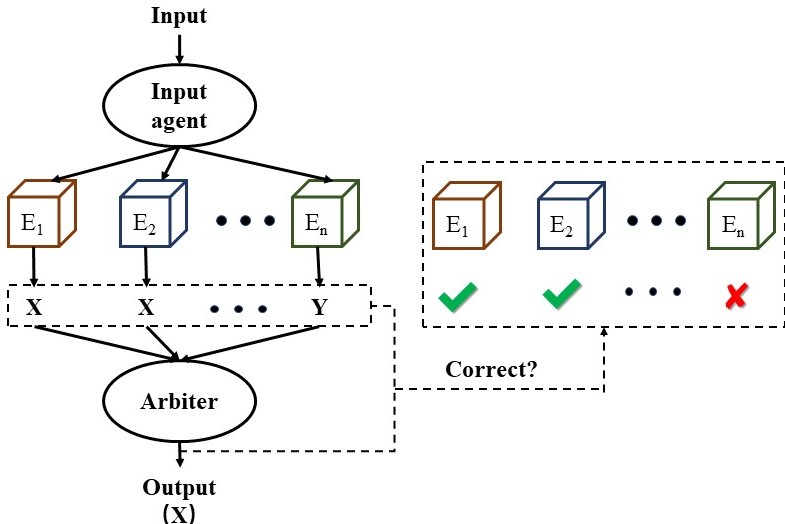}
    \caption{Schematic diagram of the process of obtaining abnormal input.}
    \label{fig:3}
\end{figure}

For the input information, each executor will output a result, and these results will be aggregated to get the output of the arbitration mechanism. We compare the output of each executor with the output of the arbitration. If the executor's result is the same as that of the arbiter, it is considered outputting correctly, otherwise, it is outputting incorrectly. The detailed procedure is provided in Algorithm~2.

\begin{algorithm}
	%\textsl{}\setstretch{1.8}
	\renewcommand{\algorithmicrequire}{\textbf{Input:}}
	\renewcommand{\algorithmicensure}{\textbf{Output:}}
	\caption{Obtaining abnormal inputs} 
	\label{alg3} 
	\begin{algorithmic}[1]
		\REQUIRE $m$:input; $\left\{ f_1,f_2.\dots,f_n \right\}$: online executor set; Arbiter
		\ENSURE 0-1 vector $b$, whose $i$th component represents whether $m$ is an abnormal input of the $i$th executor.
            \FOR{i=$1$ to $n$}
            \STATE $Result(i)=f_i(m)$
            \ENDFOR
		\STATE $ Result_0 \gets Arbiter(Result)$
            \FOR{i=$1$ to $n$}
		\IF{$Result(i) = Result_0$} 
		\STATE $b(i) \gets 0$ 
		\ELSE 
		\STATE $b(i) \gets 1$ 
		\ENDIF 
            \ENDFOR
	\end{algorithmic}
\end{algorithm}

\begin{remark}
    Although the result of the arbitration may not always be correct, the probability that the arbitration result is correct is much greater than the probability of a single executor outputting the correct result. Therefore, it is feasible to judge the abnormal input of the executor in this way.
\end{remark}

\subsection{Dynamic Update Algorithms}

In practical applications, the vulnerabilities of each executor will not be static. With the continuous evolution of the attacker and the wear and tear of the executor, new vulnerabilities will inevitably be generated. So we propose an algorithm to update the vulnerability set of the executor according to the real-time abnormal input information.

We found that abnormal inputs that are continuously discovered over time can be viewed as incoming streaming data. Based on a streaming data clustering framework proposed by Yan et al [19]., we make appropriate changes to it to better suit our application scenarios. Next, we describe the specific process of the algorithm:

First, set parameters $r$ and $N$, for each $CH_t$, and randomly select $N$ candidate centers from $CH_t$ and evaluate their density as their fitness. With the Gaussian kernel type of density function used in [19],
the fitness of each candidate center is defined as follows:

\begin{equation}
f(x_t^i)=\sum_{m=1}^N (e^{-\frac{d(x_t^i,x_t^m)}{\beta}})^\gamma, i \in [1,N],
\end{equation}
where $\beta$ denotes the variance of data, and $\gamma$ denotes the scaling parameter that approximates the shape of clusters. $N$ is the number of candidate centers. And $d(x_t^i,x_t^m)$ represents the metric between $x_t^i$ and $x_t^m$.

\begin{remark}
    The reason why $N$ candidate cluster centers are selected from $CH_t$, rather than the entire $CH_t$ as a candidate cluster center, is to prevent the computational complexity from being too high. Therefore, when the number of elements in the input block is relatively small, the entire a can also be used as a candidate cluster center.
\end{remark}

Next, we choose the input with the highest density value as the center of the first cluster. Then, the input distances less than $r$ from the center of the cluster are taken as the elements in the first cluster, and fitness proportionate are shared among them. The new fitness(density) is evaluated as follows:

\begin{equation}
f(x_t^{ij})_{new}=
\frac{f(x_t^{ij})_{old}}{\sum_{j=1}^{n_i} f(x_t^{ij})_{old}},
\end{equation}
where $x_t^{ij}$ denotes the $j$th element in the $i$th cluster at time $t$. $f(\cdot)_{new}$ and $f(\cdot)_{old}$ denote the new fitness and old fitness.

Then, we choose the input with the highest fitness at this time as the center of the second cluster, and so on, until all candidate cluster centers belong to a certain cluster. The pseudo-code for this process is provided in Algorithm~3.

\begin{algorithm}
	%\textsl{}\setstretch{1.8}
	\renewcommand{\algorithmicrequire}{\textbf{Input:}}
	\renewcommand{\algorithmicensure}{\textbf{Output:}}
	\caption{Cluster the input} 
	\label{alg3} 
	\begin{algorithmic}[1]
		\REQUIRE $CH_t$:input chunk; $N$:the size of the candidate center set; $r$:the radius of the cluster; $C_t^i$:the center of the $i$th cluster of $CH_t$; $Z_t^i$:the density of $CH_t$.
		\ENSURE $CHS_t$
  
            \STATE randomly select $N$ samples in $CH_t$ as cluster center candidates denoted as $CH_t'$
            
            \FOR{i=$1$ to $N$}
            \STATE $Z_t^i \gets f(CH_t'(i))$
            \ENDFOR
            \STATE $i \gets 1$
            \WHILE{exists an element in $CH_t'$ that doesn't belong to any cluster}
            \STATE Select the input with the highest fitness in $CH_t'$ as the $i$th cluster center, denotes as $m_i$
            \STATE Take the input whose metric to the $i$th cluster center less than $r$ as the elements in the $i$th cluster.
            \STATE Do fitness proportionate sharing to them.
            \STATE $i \gets i+1$
            \STATE $C_t^i \gets m_i$
            \STATE $Z_t^i \gets$ the number of elements in the $i$th cluster.
            \ENDWHILE
		\STATE $CHS_t \gets \left\{ C_t, Z_t \right\}$
	\end{algorithmic}
\end{algorithm}

Next, we introduce how to dynamically update the summary of an executor. Assuming that the summary at time $t+1$ is $CHS_{t+1}$, and up to time $t$, the summary of the input stream is $ISS_t$. We set the parameter $r_0$, and assume that when the distance between the two cluster centers is less than $r_0$, they should be the same cluster, that is, the difference between the two cluster centers is considered to be caused by random factors. The way to update $ISS_t$ is as follows:

For each cluster in $CHS_{t+1}$, denoted as $C_{CH_{t+1}}^{i_1}$, if there are some clusters in $ISS_t$, denoted as $C_{IS_t}^{i_2}$, satisfying $d(C_{CH_{t+1}}^{i_1},C_{IS_t}^{i_2})<r_0$, they will be merged into one cluster. Suppose the merged cluster is $C_{IS_{t+1}}^{i_3}$ and $Z_{IS_{t+1}}^{i_3}$ is its density, then the specific update formula is as follows:

\begin{equation}
\left\{
\begin{array}{cl}
C_{IS_{t+1}}^{i_3}=& C_{CH_{t+1}}^{i_1}\\
Z_{IS_{t+1}}^{i_3}=& Z_{CH_{t+1}}^{i_1}+\sum_{j=1}^{|ISS_t|} C_{IS_t^i}^{j} \mathbbm{I}(d<r_0)\\
\end{array} \right.,
\end{equation}
where $\mathbbm{I}(\cdot)$ denotes indicative function and $d$ denotes $d(C_{CH_{t+1}}^{j},C_{IS_t}^{j})$. It takes value 1 when the event happens and value 0 when the event does not happen. The remaining clusters either belong only to $CHS_{t+1}$ or only to $IS_{t}$, we add it to the new cluster $ISS_{t+1}$. The pseudo-code for this process is provided in Algorithm~4.

\begin{algorithm}
	%\textsl{}\setstretch{1.8}
	\renewcommand{\algorithmicrequire}{\textbf{Input:}}
	\renewcommand{\algorithmicensure}{\textbf{Output:}}
	\caption{Update the cluster} 
	\label{alg3} 
	\begin{algorithmic}[1]
		\REQUIRE $ISS_t$: the summary of input stream up to time $t$; $CHS_{t+1}$: the summary of input chunk at time $t+1$; $r_0$:maximum radius for cluster merging.
		\ENSURE $ISS_{t+1}$
            \FOR{i=$1$ to $|CH_{t+1}|$}
            \STATE $C_{IS_{t+1}}^i \gets C_{CH_{t+1}}^i$
            \STATE $Z_{IS_{t+1}}^i \gets Z_{CH_{t+1}}^i+\sum_{j=1}^{|IS_{t}|} Z_{CH_{t+1}}^j \mathbbm{I}(d<r_0) $
            \ENDFOR
            \STATE Add clusters in $ISS_t$ that have not been calculated to $ISS_{t+1}$.
	\end{algorithmic}
\end{algorithm}

Based on this, the calculation of k-order similarity and heterogeneity between executors only needs to turn the original count into a weighted sum, where the weight is the density of the cluster center. The value $P'(A)$ is calculated as follows:

\begin{equation}
P'(A)=\frac{\sum_{i} Z_i \mathbbm{I}(C_i \in A)} {\sum_{i} Z_i }.
\end{equation}
Based on this, we can derive k-order similarity and heterogeneity calculation methods.

\section{Scheduling Strategy based on Spatio-Temporal Heterogeneity}

In this section, we comprehensively consider the spatial heterogeneity between executors and the temporal heterogeneity between online executor sets and propose an index to evaluate potential harm. Based on this index, we propose a scheduling algorithm that can minimize potential harm. In addition, the scheduling scheme also retains uncertainty, which ensures our model is not deterministic and learnable.

\textbf{Temporal heterogeneity:} For each scheduling cycle, each executor in the online executor set has its vulnerability. At this time, the security of the system is determined by their common vulnerabilities, which are focused on by previous mimic defense models.

\textbf{Spatial heterogeneity:} Spatial heterogeneity only pays attention to the size of common vulnerabilities at the same time but does not pay attention to the duration of common vulnerabilities. If the common vulnerability of a mimic defense system is small but its common vulnerability remains the same after scheduling, then even if the probability of an attacker's successful attack is small, once he succeeds, it will pose a long-term threat to the system and cause great losses. Furthermore, in practical scenarios, attackers are often memorable, which means that attackers are more inclined to use attack methods that have been successful in the past. Therefore, we also need to minimize the intersection of vulnerabilities exposed by adjacent scheduling cycles.

We use the historical scheduling information to guide the scheduling of the next scheduling cycle.

\begin{myAss}
    The attacker has the same probability of taking an attack action at each moment. Let $p_{att}$ denote it.
\end{myAss}

\begin{myDef}
    Let $O(t)$ represent the set of common abnormal inputs of the online executor set at time $t$:
    \begin{equation}
        O(t)=V(f_{t,1})\cdot V(f_{t,2})\cdot \dots \cdot V(f_{t,n}),
    \end{equation}
    where $V(f_{t,i})$ denotes the abnormal input set of the $i$th executor at time $t$.
\end{myDef}

Let $p_{t}$ represent the probability of the attack being successful if the attacker attacks at time $t$, and $p_{t}$ can be calculated as:
\begin{equation}
    p_t = P(O(t)).
\end{equation}
From the previous analysis, $p_t$ can be estimated by $P'(O(t))$. Let $p_{t\rightarrow t+k}$ represent the probability that the attack at time $t$ is still effective at time $t+k$. If the attack launched at time $t$ is still valid at time $t+k$, it must be valid in every scheduling cycle between $t$ and $t+k$, otherwise, it will be discovered by the defense model. So $p_{t\rightarrow t+k}$ can be calculated as:
\begin{equation}
    E_{t\rightarrow t+k}=\frac{|O(t)\cdot \dots\cdot O(t+k)|}{|O(t)|}.
\end{equation}
We hope that through scheduling, the attacks are as few as possible in the $t$ scheduling period. The attack in the scheduling period $t$ is either a successful attack launched by the attacker at time $t$ or a successful attack launched before time $t$ and still valid at time $t$. The probability that the attack launched at time $t$ succeeds and enters the system is $p_t$, and the probability that an attack launched at time $t-k$ succeeds and is still effective at time $t$ is $p_{t-k} \cdot p_{t-k \rightarrow t}$. Let $E_t$ denote the expectation of attacks in the scheduling period $t$. For a given scheduling policy, The calculation method of $E_{t}$ is as follows:
\begin{equation}
    E_t=p_t+p_{t-1\rightarrow t} p_{t-1}  +p_{t-2\rightarrow t} p_{t-2} +\dots.
\end{equation}
Let $R(n)$ denote the remainder of $E_t$:
\begin{equation}
    R(n)=p_{t-k\rightarrow t} p_{t-n}  +p_{t-(n+1)\rightarrow t} p_{t-(n+1)} +\dots.
\end{equation}
Assuming $E_t$ is bounded, then when $n$ is large enough, $R(n)$ can be arbitrarily small. Therefore, we can discard the remainder and compute only the top $k$ terms of the $E_t$.

Equation (17) shows the expectation of attacks in the scheduling period $t$, but in practical application scenarios, we prefer to minimize the loss caused by the attack. Next, we modify Equation (17) to estimate the expectation of the loss caused by the attack.

\begin{myAss}
    The destructive capabilities of the attackers are all the same. In other words, the loss caused by the attacker to the system through the attack is only related to the duration of the attack.
\end{myAss}

Let $h(T)$ denote the loss to the system caused by the attacker's continuous attack for $T$ duration. Then, the expectation of the loss caused by the attack can be estimated as:

\begin{equation}
\begin{split}
        E(Harm)_t = h(p_t)+h(2  p_{t-1\rightarrow t} p_{t-1})  +\\ h(3 p_{t-2\rightarrow t} p_{t-2}) +\dots.
\end{split}
\end{equation}
Therefore, the choice of scheduling strategy can be transformed into an optimization problem: Minimize $E(Harm)_t$ by choosing the online executors set. We then solve this optimization problem using algorithms such as Monte Carlo simulations or heuristic methods that guarantee some randomness.

\section{Analysis}

\subsection{Complexity Analysis}

We discuss the computational complexity of our proposed algorithm separately.

\textbf{Dynamic Update Algorithm:} To perform a clustering algorithm based on fitness proportionate sharing, it is first necessary to calculate $\frac{N (2n-N-1)}{2} $ metrics to obtain the fitness value of each candidate cluster center, where $n$ represents the total number of the input chunk and $N$ represents the number of candidate cluster centers. Then, at most $N$ computations are required for fitness proportional sharing. Therefore, the total number of calculations of the clustering algorithm based on fitness proportionate sharing is $\frac{N (2n-N+1)}{2} $.

Next, we discuss the computational complexity of the process of merging clusters. We assume that an executor stores at most $N_c$ clusters. The cluster merging process requires us to compute at most $N_c\times N_c$ metrics and at most $N_c$ cluster density updates. Therefore, the overall complexity of the clustering algorithm based on fitness proportionate sharing is $N_c^2+N_c$.

\textbf{Scheduling Algorithm:} Calculating the $p_t$ requires $\frac{n(n-1)}{2} N_c^2$ metric calculation. Calculating $p_{t\rightarrow t+k}$ requires at most $\frac{n(n-1)}{2}N_c^2$ metric calculations. Suppose we utilize the scheduling information of the previous $L_{ST}$ times, then calculating $E_t$ requires at most $N_c^2[(L_{ST}+1)\frac{n(n-1)}{2}+\sum_{k=1}^{L_{ST}}\frac{k(k-1)}{2}]$ metric calculations. Assuming that $E_t$ needs to be calculated $K$ times in the heuristic algorithm, the total computational complexity of the scheduling algorithm is $K N_c^2[(L_{ST}+1)\frac{n(n-1)}{2}+\sum_{k=1}^{L_{ST}}\frac{k(k-1)}{2}]$. In practical situations, $k$, $n$, $L_{ST}$ are all small constants, so $[(L_{ST}+1)\frac{n(n-1)}{2}+\sum_{k=1}^{L_{ST}}\frac{k(k-1)}{2}]$ can be regarded as a constant here.

\subsection{Cost analysis}

The cost of our method can be mainly divided into the following four parts: the cost of heterogeneity, the cost of dynamic, the cost of redundancy, and the cost of cleaning and refactoring the executors, which we discuss in detail below:

\textbf{The cost of heterogeneity:} When deploying a mimic defense system, a large number of heterogeneous executors are required to ensure the security of the system. We assume that the cost of constructing each heterogeneous executor is the same, and denoted by $cost_f$. Then, the cost of heterogeneity can be evaluated as:

\begin{equation}
    Cost_H=cost_f \times N_f,
\end{equation}
where $N_f$ represents the number of executors in a heterogeneous executor pool.

\textbf{The cost of dynamic:} When performing mimic scheduling, replacing the executor requires a certain cost. This cost is related to the size of the online executor set, and the replacement of executors will have a certain impact on the normal work of the system, such as causing additional delays to the system. The cost of replacing the executor is linearly related to the number of the size of online executor set, and the additional loss caused by scheduling is often a constant. Therefore, the cost of executing a schedule can be evaluated as:

\begin{equation}
    Cost_D=cost_{D_1}+n \times cost_{D_2},
\end{equation}
where $cost_{D_1}$ represents the cost of replacing an executor, $cost_{D_2}$ represents the loss from scheduling, $n$ represents the size of online executor set. Assuming that the scheduling period is $T$, the scheduling cost per unit time is:

\begin{equation}
    Cost_{D_0}=\frac{cost_{D_1}+n \times cost_{D_2}}{T}.
\end{equation}

\textbf{The cost of redundancy:} A reliable mimic defense system needs to run multiple executors with the same function at the same time, which brings an additional cost to the whole system. In unit time, the cost of redundancy can be evaluated as:

\begin{equation}
    Cost_R=cost_{R_0}\times n \times t,
\end{equation}
where $cost_{R_0}$ represents the cost per unit of time to run an executor.

\textbf{The cost of cleaning and refactoring the executors:} In practice, we often need to clean and refactor some fragile executors, which will bring a certain cost. We assume that the cost of cleaning and refactoring each executor is the same, and let $cost_{C}$ denote it. $Cost_{C}$ is linearly related to the number of times the system cleans and refactors the executors.

Where $Cost_H$ is a one-time cost, and $Cost_D$, $Cost_R$ are costs per unit of time, which means that the cost of dynamic and redundancy will continue to increase over time, and the cost of refactoring and cleaning the executor depends on the actual situation. Then, the total cost $Cost_T$ can be expressed as:
\begin{equation}
    Cost_T=Cost_H + t(Cost_{R_0}+Cost_{D_0}) + Cost_C \times N_C,
\end{equation}
where $t$ represents the length of time the SIDMD runs. $N_C$ stands for the number of times cleaning and operations were performed.

\subsection{Simulation}
We did random simulations to verify the reliability of our model and compare our model with the original mimic defense model [6]. We assume that the input space is the region $[-50,50]\times [-50,50]$ on a two-dimensional Euclidean space and randomly generate some circular areas in this area as the vulnerabilities of each executor. We randomly generate some points within this region as inputs.

We first verify the effectiveness of the dynamic update algorithm of the SIDMD model. We randomly generate vulnerabilities for each executor and then compare them with the clusters discovered and stored by the SIDMD model. The result is shown in Figure~4. We can see from Figure~4 that the vulnerabilities discovered by the SIDMD model are very similar to the real ones.

\begin{figure*}[t]
\centering
\subfigure[real vulnerabilities]{
\label{Fig.sub.1}
\includegraphics[width=8.3cm]{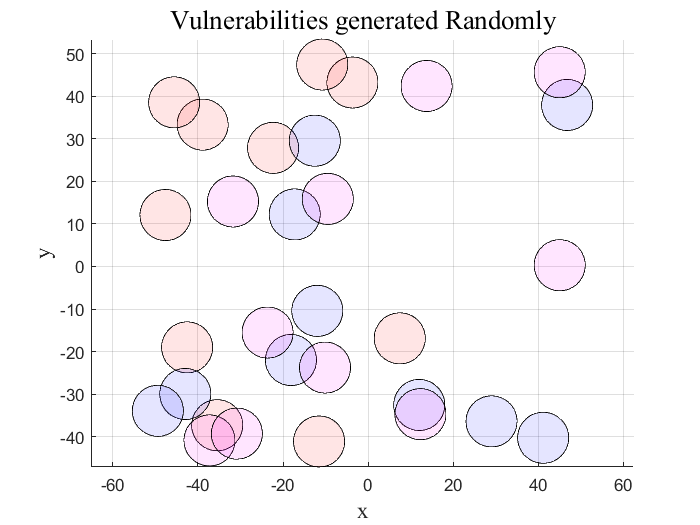}}\subfigure[clusters discovered and stored by the SIDMD]{
\label{Fig.sub.2}
\includegraphics[width=8.3cm]{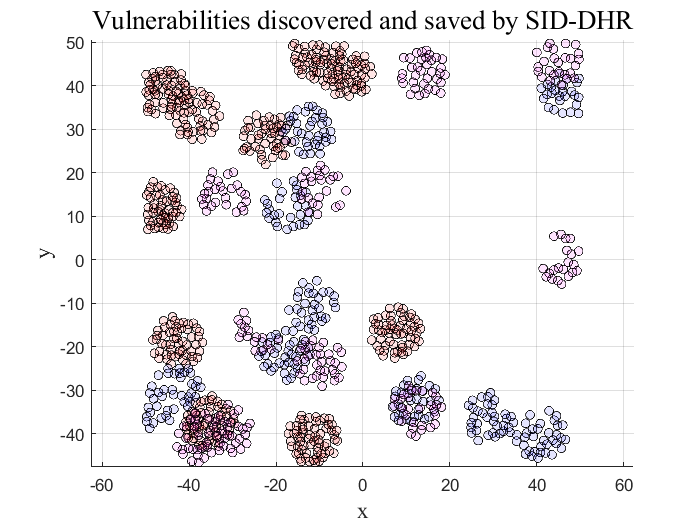}}
\caption{Comparison between true vulnerabilities and discovered vulnerabilities.}
\label{1}
\end{figure*}

Next, we verify that the performance of the DHR model is greatly improved after adding the dynamic update algorithm and the scheduling algorithm based on spatiotemporal heterogeneity. The settings of the relevant parameters and results can be seen in Table~1 and Table~2. (We say the DHR model added the dynamic update algorithm and the heterogeneity calculation algorithm as IDMD). We use a Monte Carlo simulation algorithm to solve the optimal scheduling policy.
\begin{table}[H]
\centering
\caption{Main parameters and values}
\begin{tabular}{lr} 
\hline
Main parameters                              & \multicolumn{1}{l}{Value}  \\ 
\hline
Total number of executors                    & 50                         \\
Number of vulnerabilities of each
  executor & 10                         \\
Radius of the vulnerability                                         & 6                          \\
Total number of inputs              & 100000                      \\
Size of the online executor set              & 5                          \\
Scheduling period                            & 5000                       \\
$L_{ST}$                            & 2                       \\
$r$                            & 2                       \\
$r_0$                            & 4                       \\
\hline
\end{tabular}
\end{table}

\begin{table}[H]
\centering
\caption{Comparison of DHR, IDMD, and SIDMD}
\begin{tabular}{l|lll} 
\hline
Index & DHR[6]       & IDMD     & SIDMD     \\ 
\hline
N     & 590       & 58       & 47        \\
P     & 0.98\%    & 0.10\%   & 0.08\%    \\
T     & 1671226   & 156926   & 119430    \\
n     & 36        & 2        & 0         \\
P2    & 0.0610169 & 0.034483 & 0         \\
ET    & 2832.5864 & 2705.621 & 2541.064  \\
\hline
\end{tabular}
\end{table}

In Table~2, N represents the number of successful attacks, P represents the probability of the attacker successfully attacking, T represents the total time the attacker attacked, n represents the number of times the attacker can still successfully attack after scheduling, P2 represents the probability that the attacker can still attack successfully after scheduling, and ET represents the average attack time. From Table~2, we can see that the defense performance of IDMD and SIDMD is significantly better than DHR, and IDMD and SIDMD perform similarly in the number of attacks and the probability of being attacked. However, since the IDMD model only focuses on minimizing the common vulnerabilities at the same time, and thus ignores the relationship between the common vulnerabilities in adjacent scheduling cycles, the total duration of the attack is significantly longer than that of SIDMD. The average attack time of the IDMD model is also significantly higher than the SIDMD. Therefore, SIDMD can not only reduce the probability of vulnerability exposure but also reduce the duration of the same vulnerability exposure, which is the best among the three models.

\begin{remark}
    Since we assume that we do not know the vulnerabilities of each executor at the beginning, the IDMD and SIDMD model also uses a randomly selected scheduling method at the beginning. Until the vulnerability of each executor is updated to a certain level before starting to execute our scheduling strategy. Therefore, the number of attacks here is not counted from the beginning, but from the execution of our scheduling policy.
\end{remark}

Finally, we study the difference in cost between SIDMD and DHR when their defensive performance is similar. After many experiments, we found that the security performance of SIDMD with 5 online executors is similar to that of DHR with 7 online executors. Our parameter settings and results are shown in Table~3 and Table~4, respectively.

\begin{table}[H]
\centering
\caption{Parameter settings of SIDMD with 5 online executors and DHR with 7 online executors}
\begin{tabular}{l|cc} 
\hline
\multirow{2}{*}{Main
  parameters}                                                   & \multicolumn{2}{c}{Value}  \\
                                                                                     & DHR             & SIDMD   \\ 
\hline
Total number of executors                                                            & \multicolumn{2}{c}{25}     \\
\begin{tabular}[c]{@{}l@{}}Number of vulnerabilities of each executor\end{tabular} & \multicolumn{2}{c}{10}     \\
Radius of the vulnerability                                                          & \multicolumn{2}{c}{6}      \\
Total number of inputs                                                               & \multicolumn{2}{c}{50000}  \\
\begin{tabular}[c]{@{}l@{}}Size of the online executor set\end{tabular}            & 7                & 5       \\
scheduling period                                                                    & 5000             & 5000    \\
\$L\_\{ST\}\$                                                                        & \textbackslash{} & 2       \\
\$r\$                                                                                & \textbackslash{} & 2       \\
\$r\_0\$                                                                             & \textbackslash{} & 4       \\
\hline
\end{tabular}
\end{table}

\begin{table*}[t]
\centering
\caption{The security performance of SIDMD with 5 online executors and DHR with 7 online executors}
\begin{tabular}{c|cc|cc|cc|cc|cc|cc} 
\hline
\multirow{2}{*}{Num} & \multicolumn{2}{c|}{N} & \multicolumn{2}{c|}{P} & \multicolumn{2}{c|}{T} & \multicolumn{2}{c|}{n} & \multicolumn{2}{c|}{P2} & \multicolumn{2}{c}{ET}  \\
                     & DHR    & SID           & DHR    & SID           & DHR       & SID        & DHR   & SID            & DHR  & SID              & DHR     & SID           \\ 
\hline
1                    & 404    & 331           & 0.81\% & 0.66\%        & 1273480   & 828014     & 57    & 5              & 0.14 & 0.02             & 3152.18 & 2501.55       \\
2                    & 291    & 157           & 0.58\% & 0.31\%        & 906417    & 422970     & 33    & 0              & 0.11 & 0.00             & 3114.84 & 2694.08       \\
3                    & 340    & 320           & 0.68\% & 0.64\%        & 889117    & 842954     & 2     & 8              & 0.01 & 0.03             & 2615.05 & 2634.23       \\
4                    & 315    & 255           & 0.63\% & 0.51\%        & 793453    & 645269     & 2     & 0              & 0.01 & 0.00             & 2518.90 & 2530.47       \\
5                    & 214    & 292           & 0.43\% & 0.58\%        & 582114    & 797379     & 18    & 12             & 0.08 & 0.04             & 2720.16 & 2730.75       \\
6                    & 215    & 307           & 0.43\% & 0.61\%        & 533119    & 821392     & 2     & 6              & 0.01 & 0.02             & 2479.62 & 2675.54       \\
7                    & 247    & 331           & 0.49\% & 0.66\%        & 827266    & 870857     & 26    & 12             & 0.11 & 0.04             & 3349.26 & 2630.99       \\
8                    & 129    & 331           & 0.26\% & 0.66\%        & 331022    & 803925     & 0     & 0              & 0.00 & 0.00             & 2566.06 & 2428.78       \\
9                    & 210    & 237           & 0.42\% & 0.47\%        & 519501    & 589094     & 4     & 2              & 0.02 & 0.01             & 2473.81 & 2485.63       \\
10                   & 318    & 209           & 0.64\% & 0.42\%        & 867537    & 528037     & 14    & 0              & 0.04 & 0.00             & 2728.10 & 2526.49       \\
Ave                  & 268.30 & 277.00        & 0.01   & 0.01          & 752302.60 & 714989.10  & 15.80 & 4.50           & 0.05 & 0.01             & 2771.80 & 2583.85       \\
\hline
\end{tabular}
\end{table*}

\subsection{Validation on real dataset}

Darknet traffic classification is important for classifying real-time applications. Analyzing dark web traffic can help monitor malware ahead of time and detect malicious activity after an attack has broken out. Our experimental data is selected from the darknet dataset: https://www.unb.ca/cic/datasets/darknet2020.html. There are about 260,000 pieces of data and 78 valid features.

We verify the effectiveness of SIDMD at a specific moment in the poisoning attack scenario. We choose one or more of StatP [11], Nopt [12], alfa [13] as the attack, and TRIM [11], Proda [12], K\_LID\_SVM [13] as the defense respectively. We select 2000 data sets as datasets with a class balance, among which, 1800 are used as training sets (for [13] we chose 1600 training sets, 200 validation sets), and 200 as validation sets. The poisoning rate is set to 20\%, and the mixed attack is set to 10\% each, and they attack without interfering with each other. 

The linear model without regularization term is selected as the linear model of StatP [11] and Proda [12], the initialization of the attack selection is InvFlip, the attack variable is $x$ (instead of $x$ and $y$), and the model is a white-box model. For defense, the maximum number of iterations $\gamma$ is set to 30. Since the original text is regression-oriented, we convert it to binary classification by setting a threshold.

SVM [13] is a soft-interval Gaussian kernel SVM, the model parameter selection is consistent with the original experimental code, C = 0.39685026299204973 GAMMA = 0.7937005259840995. The attack adopts an adaptive attack, and the defense is the default defense.

SIDMD selects 5 classifiers as its executive body, including linear model, SVM, KNN (cluster classifier), random forest (tree structure classifier), and Naive Bayes (probabilistic model classifier). Without loss of generality, both the linear classifier and the SVM classifier in DHR are the same as above.

The experimental results we obtained are shown in Table~5. Given the different baselines of different models, the accuracy cannot be directly compared. We compare the accuracies before and after the attack to obtain the performance retention rate, as shown in Table~6. From this result, we can conclude that the defense performance of the SIDMD model is effective and robust to different attacks.
\textbf{Effectiveness:} From Table~6, SIMDM has good defensive performance against various attacks, including against different individual attacks and mixed attacks.

\textbf{Robustness:} For different types of attacks, the performance of Proda and TRIM is not stable enough: Proda resists alfa and StatP mixed attacks, the performance retention rate is 0.9595, but when resisting StatP attacks, the performance retention rate is 0.9133; when TRIM resists alfa attacks, the performance The retention rate is 0.896, but the performance retention rate is only 0.8092 when defending against StatP attacks. But SIDMD can show a good and stable defense effect against different kinds of attacks.

\begin{table}[H]
\centering
\caption{Multiple Attacks and Multiple Defense Test Results: Accuracy}
\begin{tabular}{l|lll} 
\hline
Attack
  model & \begin{tabular}[c]{@{}l@{}}Linear with \\Proda [11]\end{tabular} & \begin{tabular}[c]{@{}l@{}}Linear with\\TRIM [12]\end{tabular} & \begin{tabular}[c]{@{}l@{}}SVM with\\K\_LID\_SVM [13]\end{tabular}  \\ 
\hline
StatP [11]          & 0.79                    & 0.7                    & 0.91   \\
Nopt [12]           & 0.815                   & 0.765                  & 0.89   \\
alfa [13]           & 0.83                    & 0.77                   & 0.91   \\
alfa \& StatP    & 0.795                   & 0.775                  & 0.885  \\
\hline
\end{tabular}
\end{table}

\begin{table}[h]
\centering
\caption{Multiple Attacks and Multiple Defense Test Results: Performance Retention Rate}
\begin{tabular}{l|lll} 
\hline
Attack
  model & Linear with Proda & Linear with TRIM & SIDMD   \\ 
\hline
StatP          & 0.9133                  & 0.8092                 & 1       \\
Nopt           & 0.9422                  & 0.8844                 & 0.978   \\
alfa           & 0.9191                  & 0.896                  & 0.9725  \\
alfa \& StatP    & 0.9595                  & 0.8902                 & 1       \\
\hline
\end{tabular}
\end{table}

Since it is difficult to know whether the data is poisoned or not in practice, we also tested the effect of defense on the classification accuracy when the data is not poisoned.

\begin{table}[H]
\centering
\caption{Deploy defense without attack performance}
\begin{tabular}{l|llll} 
\hline
Index                                                                & \begin{tabular}[c]{@{}l@{}}Linear with \\Proda\end{tabular} & \begin{tabular}[c]{@{}l@{}}Linear with \\TRIM\end{tabular} & \begin{tabular}[c]{@{}l@{}}SVM with \\K\_LID\_SVM\end{tabular} & SIDMD  \\ 
\hline
Accuracy                                                             & 0.785                                                       & 0.7                                                        & 0.77                                                           & 0.91   \\
\begin{tabular}[c]{@{}l@{}}Performance \\retention rate\end{tabular} & 0.9075                                                      & 0.8092                                                     & 0.9112                                                         & 1      \\
\hline
\end{tabular}
\end{table}

As can be seen from Table~7, when there is no attack, only SIDMD has no decrease in the accuracy after deployment, and the rest of the defense methods have varying degrees of influence on the classification effect of the model. Therefore, to deploy defenses such as Proda and TRIM, it is necessary to know in advance when the attacker will attack, otherwise it will have side effects on the model. However, deploying the SIDMD model does not require prior knowledge, which manifests its advantages.

\section{Related Work}

The forms of attacks against cyberspace are becoming increasingly sophisticated, and the need for cybersecurity to defend against specific attacks in specific domains is becoming more prominent. Currently, most of the work is point-to-point research on attack and defense strategies, deploying one defense against one type of attack. For example, the ability of machine learning algorithms, as tools that can assist people in making decisions for large projects, to defend against various attacks determines the future of machine learning algorithms. In addition to model attacks that are effective on specific models [20], attackers can cause poisoning attacks by manipulating or maliciously injecting anomalous data during training [21]. Video recognition [22], communication networks [23], malware detection [24], edge computing [25], and domain name resolution [16] have demonstrated actual poisoning attacks.

For specific models, support vector machines [13], linear regression [17], logistic [14], generative adversarial networks [26], and Bayesian networks [27] have been studied concerning the attack and defense, but they only stay in point-to-point attack and defense. Sandamal et al [13] proposed a weighted support vector machine, an approach that significantly reduces the classification error rate by computing K-LID for feature classification in high-dimensional transform space. Wen et al [17] introduced probability estimation of clean data points in a linear regression model to reduce errors caused by poisoned datasets by optimizing the integrated model. Esmaeilpour et al [26] proposed a novel GAN architecture trained with the Sobolev integral probability metric to improve the performance of the model in terms of stability and the total number of learned patterns. However, these defense techniques can only target specific attacks of specific models and are less robust in the face of multiple attacks. Therefore, for defense research in cyberspace, upgrading from point-to-point specific defense to point-to-point active defense will be the future trend.

In recent years, Moving Target Defense (MTD) [28] has received more and more attention in the field of network security. In recent years, there have been many Defense Studies on false data injection attacks [29]-[32], adversarial attacks [33], [34], and DDoS attacks [35], [36]. MTD aims to build a dynamic, heterogeneous, and uncertain target environment in cyberspace to increase the attack difficulty of attackers. This method has pioneered the defense against external attacks by changing internal structures. However, it still does not break away from the traditional thinking of additional defense, and the problem of backdoor or trap door is difficult to defend [37]. Therefore, it is necessary to introduce the redundancy theory in mimic defense to improve the overall defense performance of the system. The existing research on mimic defense mostly focuses on two aspects: heterogeneous measurement and scheduling strategy.

(1) Heterogeneity metric: For the metric of heterogeneity and similarity of an executor, it is necessary to extract the common features of endogenous security, i.e., to make a reasonable definition of the vulnerabilities of an executor so that normalized comparisons can be made between the vulnerabilities of different executors. Liu et al [38] divided the executor into n parts, each of which has several optional components, and denoted the set of vulnerabilities of the executor components by the symbol, which is extended to the executor by considering the magnitude of the common-mode vulnerabilities among the components that make up the executor. Wang et al [39] consider the different costs of common-mode vulnerabilities among different components of the executor and introduce weights in combining the common-mode vulnerabilities of each component of the executor, whose values are determined by the threat level of vulnerabilities in each layer. Zhang et al [40] calculate the heterogeneity by comparing the higher-order heterogeneity between multiple executions is defined by comparing whether vulnerabilities exist simultaneously between the "alleles" of multiple executions in turn. [38], [39], [40] all give strong assumptions on the definition of the vulnerability of the executor, and often use symbols to represent the vulnerability. In practical applications, it is often difficult for us to find out its practical significance.

(2) Scheduling strategy: most of the current studies on scheduling strategies consider the optimal defense performance at the same time, lacking the consideration of the time dimension [6], [38]-[42]. Wu et al. proposed a completely randomized scheduling strategy algorithm in [6], which generates the set of online executors by a completely randomized method. Liu et al. [38] proposed a randomized seed scheduling strategy algorithm based on the minimum similarity between online executors (RSSSA). This algorithm randomly selects an executor as the seed executor and then schedules a set of executors with minimum similarity. Wang et al [39] proposed a scheduling strategy based on Bayesian Stackelberg game theory. In the web server scenario, this algorithm can obtain the online actuator set by computing the difference between the online and offline actuator sets to maximize the security gain on the defense side. In [41], Zhang et al. added an analyzer to the simulated defense system to learn the historical information of each executor, and then dynamically schedule the executors based on the analysis results and evolutionary game theory. In addition, Yang et al [42] proposed a policy algorithm with feedback capability, which calculates the scheduling probability of an executor based on a table of historical information. They verified the defensive performance of their algorithm by designing simulated collision experiments. For the historical confidence of executors over time and the heterogeneity among executors, Zhang et al [40] proposed an optimal scheduling algorithm. They evaluated the historical confidence of the executors by setting a sliding window, which can effectively improve the operational efficiency of the algorithm and the security of the system under non-uniform distributed network attacks.

\section{Conclusion}
We perform the first systematic study on a general theoretical model of cyberspace defense. We propose a way to define vulnerabilities from the input space and a new method for calculating heterogeneity. We also design an algorithm that can dynamically update vulnerabilities based on historical attack information and have better defense performance when encountering highly dynamic attacks. To improve the lack of temporal metrics in existing scheduling algorithms, we propose a scheduling strategy based on temporal heterogeneity, SIDMD, which significantly reduces the probability of attackers detecting common mode vulnerabilities in defense systems. We evaluate the performance of DHR on a dataset subjected to multiple attacks simultaneously and perform modeling simulations of SIDMD. We believe that our work will inspire future research in related fields to develop more secure anthropomorphic defense systems against various unknown attacks.

% conference papers do not normally have an appendix

% use section* for acknowledgment
%\ifCLASSOPTIONcompsoc
  % The Computer Society usually uses the plural form
 % \section*{Acknowledgments}
%\else
  % regular IEEE prefers the singular form
  %\section*{Acknowledgment}
%\fi

% We want to thank the anonymous reviewers for their valuable feedback. 

% trigger a \newpage just before the given reference
% number - used to balance the columns on the last page
% adjust value as needed - may need to be readjusted if
% the document is modified later
%\IEEEtriggeratref{8}
% The "triggered" command can be changed if desired:
%\IEEEtriggercmd{\enlargethispage{-5in}}

% references section

% can use a bibliography generated by BibTeX as a .bbl file
% BibTeX documentation can be easily obtained at:
% http://mirror.ctan.org/biblio/bibtex/contrib/doc/
% The IEEEtran BibTeX style support page is at:
% http://www.michaelshell.org/tex/ieeetran/bibtex/
%\bibliographystyle{IEEEtran}
% argument is your BibTeX string definitions and bibliography database(s)
%\bibliography{IEEEabrv,../bib/paper}

\begin{thebibliography}{1}
\bibitem{IEEEhowto:kopka}
P. Chen, F. Li and J. Li, “Research on Intrusion Detection Model Based on Bagged Tree,” in \emph{2021 IEEE Conference on Telecommunications, Optics and Computer Science (TOCS)}, 2021, pp. 579-582
\bibitem{IEEEhowto:kopka}
T. Sethi and R. Mathew, "A Study on Advancement in Honeypot based Network Security Model," 2021 Third International Conference on Intelligent Communication Technologies and Virtual Mobile Networks (ICICV), 2021, pp. 94-97, doi: 10.1109/ICICV50876.2021.9388412.
\bibitem{IEEEhowto:kopka}
K. Yoshioka, Y. Hosobuchi, T. Orii and T. Matsumoto, "Vulnerability in Public Malware Sandbox Analysis Systems", 2010 10th IEEE/IPSJ International Symposium on Applications and the Internet, 2010, pp. 265-268, doi:10.1109/SAINT.2010.16.
\bibitem{IEEEhowto:kopka}
M. Garcia, A. Bessani, I. Gashi, N. Neves and R. Obelheiro, "OS diversity for intrusion tolerance: Myth or reality?," 2011 IEEE/IFIP 41st International Conference on Dependable Systems \& Networks (DSN), 2011, pp. 383-394, doi: 10.1109/DSN.2011.5958251.
\bibitem{IEEEhowto:kopka}
M. Carvalho and R. Ford, "Moving-Target Defenses for Computer Networks," in IEEE Security \& Privacy, vol. 12, no. 2, pp. 73-76, Mar.-Apr. 2014, doi: 10.1109/MSP.2014.30.
\bibitem{IEEEhowto:kopka}
J. Wu, "Research on cyber mimic defense", J. Cyber Secur., vol. 1, no. 4, pp. 1-10, Oct. 2016.
\bibitem{IEEEhowto:kopka}
Q. Zhang, H. Tang, W. You and Y. Li, "A Method for Constructing Heterogeneous Entities Pool in NFV Security Architecture Based on Mimic Defense," 2020 IEEE 6th International Conference on Computer and Communications (ICCC), 2020, pp. 1029-1033, doi: 10.1109/ICCC51575.2020.9345155.
\bibitem{IEEEhowto:kopka}
L. Pu, J. Wu, H. Ma, Y. Zhu and Y. Li, "MimicCloudSim: An environment for modeling and simulation of mimic cloud service," in China Communications, vol. 18, no. 1, pp. 212-221, Jan. 2021, doi: 10.23919/JCC.2021.01.018.
\bibitem{IEEEhowto:kopka}
X. Sang and Q. Li, "Mimic Defense Techniques of Edge-Computing Terminal," 2019 IEEE Fifth International Conference on Big Data Computing Service and Applications (BigDataService), 2019, pp. 247-251, doi: 10.1109/BigDataService.2019.00043.
\bibitem{IEEEhowto:kopka}
Z. Lin, K. Li, H. Hou, X. Yang and H. Li, "MDFS: A mimic defense theory based architecture for distributed file system," 2017 IEEE International Conference on Big Data (Big Data), 2017, pp. 2670-2675, doi: 10.1109/BigData.2017.8258229.
\bibitem{IEEEhowto:kopka}
M. Jagielski, A. Oprea, B. Biggio, C. Liu, C. Nita-Rotaru, and B. Li, “Manipulating machine learning: Poisoning attacks and countermeasures for regression learning,” in Proc. IEEE Symp. Secur. Privacy (IEEE S&P), May 2018, pp. 19–35, doi: 10.1109/SP.2018.00057.
\bibitem{IEEEhowto:kopka}
J. Wen, B. Z. H. Zhao, M. Xue, A. Oprea and H. Qian, "With Great Dispersion Comes Greater Resilience: Efficient Poisoning Attacks and Defenses for Linear Regression Models," in IEEE Transactions on Information Forensics and Security, vol. 16, pp. 3709-3723, 2021, doi:
\bibitem{IEEEhowto:kopka}
S. Weerasinghe, T. Alpcan, S. M. Erfani and C. Leckie, "Defending Support Vector Machines Against Data Poisoning Attacks," in IEEE Transactions on Information Forensics and Security, vol. 16, pp. 2566-2578, 2021, doi: 10.1109/TIFS.2021.3058771.
\bibitem{IEEEhowto:kopka}
S. Mei, X. Zhu . "Using machine teaching to identify optimal training-set attacks on machine learners," in National Conference on Artificial Intelligence AAAI Press, 2015.
\bibitem{IEEEhowto:kopka}
J. Chen, X. Zhang, R. Zhang, C. Wang and L. Liu, "De-Pois: An Attack-Agnostic Defense against Data Poisoning Attacks," in IEEE Transactions on Information Forensics and Security, vol. 16, pp. 3412-3425, 2021, doi: 10.1109/TIFS.2021.3080522.
\bibitem{IEEEhowto:kopka}
X. Li, Z. Qu, S. Zhao, B. Tang, Z. Lu and Y. Liu, "LoMar: A Local Defense Against Poisoning Attack on Federated Learning," in IEEE Transactions on Dependable and Secure Computing, doi: 10.1109/TDSC.2021.3135422.
\bibitem{IEEEhowto:kopka}
Z. Ma, J. Ma, Y. Miao, Y. Li and R. H. Deng, "ShieldFL: Mitigating Model Poisoning Attacks in Privacy-Preserving Federated Learning," in IEEE Transactions on Information Forensics and Security, vol. 17, pp. 1639-1654, 2022, doi: 10.1109/TIFS.2022.3169918.
\bibitem{IEEEhowto:kopka}
X. Liu, H. Li, G. Xu, Z. Chen, X. Huang and R. Lu, "Privacy-Enhanced Federated Learning Against Poisoning Adversaries," in IEEE Transactions on Information Forensics and Security, vol. 16, pp. 4574-4588, 2021, doi: 10.1109/TIFS.2021.3108434.
\bibitem{IEEEhowto:kopka}
Yan, X., Homaifar, A., Sarkar, M., Girma, A., \&amp; Tunstel, E. (2021). A clustering-based framework for classifying data streams. Proceedings of the Thirtieth International Joint Conference on Artificial Intelligence. https://doi.org/10.24963/ijcai.2021/448 
\bibitem{IEEEhowto:kopka}
M. Khosravy, K. Nakamura, Y. Hirose, N. Nitta and N. Babaguchi, "Model Inversion Attack by Integration of Deep Generative Models: Privacy-Sensitive Face Generation From a Face Recognition System," in IEEE Transactions on Information Forensics and Security, vol. 17, pp. 357-372, 2022, doi: 10.1109/TIFS.2022.3140687.
\bibitem{IEEEhowto:kopka}
Schuster R ,  Schuster T ,  Meri Y , et al. Humpty Dumpty: Controlling Word Meanings via Corpus Poisoning[C]// 2020 IEEE Symposium on Security and Privacy (IEEE S&P). IEEE, 2020.
\bibitem{IEEEhowto:kopka}
S. Xie, Y. Yan and Y. Hong, "Stealthy 3D Poisoning Attack on Video Recognition Models," in IEEE Transactions on Dependable and Secure Computing, doi: 10.1109/TDSC.2022.3163397.
\bibitem{IEEEhowto:kopka}
O. Stan et al., "Extending Attack Graphs to Represent Cyber-Attacks in Communication Protocols and Modern IT Networks," in IEEE Transactions on Dependable and Secure Computing, vol. 19, no. 3, pp. 1936-1954, 1 May-June 2022, doi: 10.1109/TDSC.2020.3041999.
\bibitem{IEEEhowto:kopka}
C. Li et al., "Backdoor Attack on Machine Learning Based Android Malware Detectors," in IEEE Transactions on Dependable and Secure Computing, doi: 10.1109/TDSC.2021.3094824.
\bibitem{IEEEhowto:kopka}
J. Zhou et al., "A Differentially Private Federated Learning Model against Poisoning Attacks in Edge Computing," in IEEE Transactions on Dependable and Secure Computing, doi: 10.1109/TDSC.2022.3168556.
\bibitem{IEEEhowto:kopka}
M. Esmaeilpour, P. Cardinal and A. L. Koerich, "Multidiscriminator Sobolev Defense-GAN Against Adversarial Attacks for End-to-End Speech Systems," in IEEE Transactions on Information Forensics and Security, vol. 17, pp. 2044-2058, 2022, doi: 10.1109/TIFS.2022.3175603.
\bibitem{IEEEhowto:kopka}
O. A. Wahab, J. Bentahar, H. Otrok and A. Mourad, "Resource-Aware Detection and Defense System against Multi-Type Attacks in the Cloud: Repeated Bayesian Stackelberg Game," in IEEE Transactions on Dependable and Secure Computing, vol. 18, no. 2, pp. 605-622, 1 March-April 2021, doi: 10.1109/TDSC.2019.2907946.
\bibitem{IEEEhowto:kopka}
Torquato M ,  Vieira M . Moving Target Defense in Cloud Computing: A Systematic Mapping Study[J]. Computers \& Security, 2020:101742.
\bibitem{IEEEhowto:kopka}
M. Higgins, F. Teng and T. Parisini, "Stealthy MTD Against Unsupervised Learning-Based Blind FDI Attacks in Power Systems," in IEEE Transactions on Information Forensics and Security, vol. 16, pp. 1275-1287, 2021, doi: 10.1109/TIFS.2020.3027148.
\bibitem{IEEEhowto:kopka}
Z. Zhang, R. Deng, D. K. Y. Yau, P. Cheng and J. Chen, "Analysis of Moving Target Defense Against False Data Injection Attacks on Power Grid," in IEEE Transactions on Information Forensics and Security, vol. 15, pp. 2320-2335, 2020, doi: 10.1109/TIFS.2019.2928624.
\bibitem{IEEEhowto:kopka}
Hu, Y. ,  Zhu, P. ,  Xun, P. ,  Kang, W. ,  Xiong, Y. , \&  Shi, W., "CPMTD: cyber-physical moving target defense for hardening the security of power system against false data injected attack", Computers \& Security, 111, 2021, 102465.
\bibitem{IEEEhowto:kopka}
A, Yifan Hu , et al. "Network-based Multidimensional Moving Target Defense against False Data Injection Attack in Power System." Computers \& Security (2021).
\bibitem{IEEEhowto:kopka}
Qian, Yaguan \& Shao, Qiqi \& Wang, Jiamin \& Lin, Xiang \& Guo, Yankai \& Gu, Zhaoquan \& Wang, Bin \& Wu, Chunming. (2020). EI-MTD:Moving Target Defense for Edge Intelligence against Adversarial Attacks. 
\bibitem{IEEEhowto:kopka}
Amich, Abderrahmene \& Eshete, Birhanu. (2021). Morphence: Moving Target Defense Against Adversarial Examples. 
\bibitem{IEEEhowto:kopka}
Zhou, Yuyang \& Cheng, Guang \& Jiang, Shanqing \& Zhao, Yuyu \& Chen, Zihan. (2020). Cost-effective moving target defense against DDoS attacks using trilateral game and multi-objective Markov decision processes. Computers \& Security. 97. 10.1016/j.cose.2020.101976. 
\bibitem{IEEEhowto:kopka}
Y. Zhou, G. Cheng and S. Yu, "An SDN-Enabled Proactive Defense Framework for DDoS Mitigation in IoT Networks," in IEEE Transactions on Information Forensics and Security, vol. 16, pp. 5366-5380, 2021, doi: 10.1109/TIFS.2021.3127009.
\bibitem{IEEEhowto:kopka}
J. Wu. Cyberspace mimic security defense [J]. Confidential science and technology, 2014 (10): 4-9 + 1
\bibitem{IEEEhowto:kopka}
Q. Liu, S. Lin and Z. Gu, "Heterogeneous redundancies scheduling algorithm for mimic security defense", J. Commun., vol. 39, pp. 188-198, Jul. 2018.
\bibitem{IEEEhowto:kopka}
X. Wang, W. Yang, W. Zhang and Z. Yang, "Research on scheduling strategy of mimic web server based on BSG", J. Commun., vol. 39, no. Z2, pp. 112-120, 2018.
\bibitem{IEEEhowto:kopka}
W. Zhang, S. Wei, L. Tian, K. Song and Z. Zhu, "Scheduling algorithm based on heterogeneity and confidence for mimic defense", J. Web Eng., vol. 19, no. 7, pp. 971-997, Dec. 2020.
\bibitem{IEEEhowto:kopka}
Q. Zhang, H. Tang, W. You and L. Pu, "Dynamic scheduling strategies of NFV mimic defense architecture based on evolutionary game", Comp. Eng..
\bibitem{IEEEhowto:kopka}
L. Yang, Y. Wang and J. Zhang, "FAWA: A negative feedback dynamic scheduling algorithm for heterogeneous executor", Comput. Sci., vol. 48, no. 8, pp. 284-290, Aug. 2021.
\end{thebibliography}
%
% <OR> manually copy in the resultant .bbl file
% set second argument of \begin to the number of references
% (used to reserve space for the reference number labels box)

% that's all folks
\end{document}